\let\proof\@undefined
\let\endproof\@undefined
\newcommand{\R}{\mathbb{R}}
\newcommand{\N}{\mathbb{N}}
\newtheorem{assum}{Assumption}
\newtheorem{problem}[assum]{Problem}
\newtheorem{remark}[assum]{Remark}
\newtheorem{theorem}[assum]{Theorem}
\newtheorem{coro}[assum]{Corollary}
\newtheorem{lemma}[assum]{Lemma}
\title{Sampled-data reachability analysis\\ using sensitivity and mixed-monotonicity}
\author{Pierre-Jean Meyer$^{1}$, Samuel Coogan$^{2}$ and Murat Arcak$^{1}$% <-this % stops a space
\thanks{Funded in part by the National Science Foundation grant CNS-1446145.}% <-this % stops a space
\thanks{$^{1}$P.-J.\ Meyer and M.\ Arcak are with the Department of Electrical Engineering and Computer Sciences, University of California, Berkeley, USA.
{\tt\small $\{$pjmeyer,arcak$\}$@berkeley.edu}}%
\thanks{$^{2}$S. Coogan is with the School of Electrical and Computer Engineering and the School of Civil and Environmental Engineering at the Georgia Institute of Technology, Atlanta, USA.
{\tt\small sam.coogan@gatech.edu}}%
}
\begin{document}
\maketitle
\thispagestyle{empty}
\pagestyle{empty}

%%%%%%%%%%%%%%%%%%%%%%%%%%%%%%%%%%%%%%%%%%%
%%%%%%%%%%%%%%%%%%%%%%%%%%%%%%%%%%%%%%%%%%%

\begin{abstract}
This paper over-approximates the reachable sets of a continuous-time uncertain system using the sensitivity of its trajectories with respect to initial conditions and uncertain parameters.
We first prove the equivalence between an existing over-approximation result based on the sign-stability of the sensitivity matrices and a discrete-time approach relying on a mixed-monotonicity property.
We then present a new over-approximation result which scales at worst linearly with the state dimension and is applicable to any continuous-time system with bounded sensitivity.
Finally, we provide a simulation-based approach to estimate these bounds through sampling and falsification.
The results are illustrated with numerical examples on traffic networks and satellite orbits.

% at most 200 words and 1800 characters
\end{abstract}

%%%%%%%%%%%%%%%%%%%%%%%%%%%%%%%%%%%%%%%%%%%
%%%%%%%%%%%%%%%%%%%%%%%%%%%%%%%%%%%%%%%%%%%

\begin{IEEEkeywords}
Numerical algorithms, Uncertain systems.
\end{IEEEkeywords}

%%%%%%%%%%%%%%%%%%%%%%%%%%%%%%%%%%%%%%%%%%%
%%%%%%%%%%%%%%%%%%%%%%%%%%%%%%%%%%%%%%%%%%%
\section{Introduction}
\label{sec intro}
\IEEEPARstart{R}{eachability} analysis deals with the problem of computing the set of all possible successors of a system given its sets of initial conditions and admissible disturbance and uncertainty values (see e.g.~\cite{blanchini2008set,rakovic2006reachability}).
Since exact computation of the reachable set is rarely possible, 
we instead evaluate an over-approximation to guarantee that the obtained set contains all possible successors of the system.
Various methods and representations exist for these over-approximations, including ellipsoids~\cite{kurzhanskiy2007ellipsoidal}, polytopes~\cite{chutinan2003computational}, zonotopes~\cite{althoff2010computing}, level-sets~\cite{mitchell2000level} and unions of intervals~\cite{jaulin2001applied}.
Their main focus is on obtaining over-approximations as close as possible to the actual reachable set, which can then be used for safety verification to ensure that a bad set is never crossed (see e.g.~\cite{frehse2005phaver}).

Alternatively, methods using a single interval such as~\cite{ramdani2009hybrid} focus less on the quality of the over-approximations and more on the simplicity of implementation, including features such as low memory usage (only two states) and low complexity of the reachability analysis (at best constant for monotone systems~\cite{angeli_monotone}, at worst linear in the state dimension~\cite{xue2017cdc}).
These properties are particularly important in the context of abstraction-based control synthesis (see, e.g.,~\cite{coogan2015efficient}) where a large number of over-approximations have to be computed, stored and intersected with other intervals.

This paper focuses on the computation of interval over-approximations of reachable sets for a continuous-time uncertain system.
As opposed to monotonicity-based approaches relying on the sign of the Jacobian matrices~\cite{ramdani2009hybrid,coogan2015efficient}, the proposed approach uses the sensitivity matrices (partial derivatives of the system trajectories with respect to the initial state or uncertain parameters).
Such an approach was introduced in~\cite{xue2017cdc} for the case of systems whose sensitivity matrix is sign-stable over the set of initial states.

This paper presents three main contributions.
1) In Section~\ref{sec sign stable}, we prove the equivalence between the sign-stable sensitivity approach in~\cite{xue2017cdc} for continuous-time systems and the one based on mixed-monotonicity for discrete-time systems in~\cite{coogan2015efficient}.
2) We next propose in Section~\ref{sec bounded} a generalized sensitivity-based reachability analysis applicable to any continuous-time system whose sensitivity matrices are bounded. This generalization is motivated by the one introduced in~\cite{yang2017note} for continuous-time mixed-monotone systems.
3) Since the proposed approach is based on the system trajectories and sensitivity, which are unknown for most continuous-time systems, we lastly present a simulation-based method to estimate the sensitivity bounds using sampling and falsification in Section~\ref{sec bounds}.
Section~\ref{sec simu} then illustrates these results through an example of traffic flow on a road network and an example of a satellite orbit.

%%%%%%%%%%%%%%%%%%%%%%%%%%%%%%%%%%%%%%%%%%%
%%%%%%%%%%%%%%%%%%%%%%%%%%%%%%%%%%%%%%%%%%%
% \section{Preliminaries}
\section{Problem formulation}
\label{sec prelim}
Let $\R$ be the set of reals and $\mathcal{I}\subseteq 2^\R$ the set of closed real intervals, i.e.,\ for all $X\in\mathcal{I}$, there exist $\underline{x},\overline{x}\in\R$ such that $X=[\underline{x},\overline{x}]=\{x\in\R~|~\underline{x}\leq x\leq \overline{x}\}\subseteq\R$.
$\mathcal{I}^n$ and $\mathcal{I}^{n\times q}$ then represent the sets of interval vectors in $\R^n$ and interval matrices in $\R^{n\times q}$, respectively.

%%%%%%%%%%%%%%%%%%%%%%%%%%%%%%%%%%%%%%%%%%%
% \subsection{System definition}
% \label{sub prelim system}
We consider a continuous-time, time-varying system
\begin{equation}
\label{eq system}
\dot x=f(t,x,p),
\end{equation}
with state $x\in\R^n$, uncertain parameter $p\in\R^q$ and continuously differentiable vector field $f:\R\times\R^n\times\R^q\rightarrow\R^n$.
We denote as $\Phi(t;t_0,x_0,p)\in\R^n$ the state reached by (\ref{eq system}) at time $t\geq t_0$ from initial state $x_0$ with parameter $p$.
The variable $p$ can also represent control or disturbance parameters that remain constant over the considered time interval $[t_0,t]$.
Given sets $X_0\subseteq\R^n$ and $P\subseteq\R^q$ of initial states and parameters, respectively, the reachable set of (\ref{eq system}) at time $t\geq t_0$ is denoted as
\begin{equation}
\label{eq reachable set}
R(t;t_0,X_0,P)=\{\Phi(t;t_0,x_0,p)~|~x_0\in X_0,~p\in P\}.
\end{equation}
  
The sensitivity of the trajectories of (\ref{eq system}) with respect to the initial conditions and parameters are defined as
\begin{gather}
\label{eq sensitivity state}
s^x(t;t_0,x_0,p)=\frac{\partial\Phi(t;t_0,x_0,p)}{\partial x_0}\in\R^{n\times n},\\
\label{eq sensitivity parameter}
s^p(t;t_0,x_0,p)=\frac{\partial\Phi(t;t_0,x_0,p)}{\partial p}\in\R^{n\times q}.
\end{gather}
% \begin{equation}
% \label{eq sensitivity state}
% s^x(t;t_0,x_0,p)=\frac{\partial\Phi(t;t_0,x_0,p)}{\partial x_0}\in\R^{n\times n},
% \end{equation}
% \begin{equation}
% \label{eq sensitivity parameter}
% s^p(t;t_0,x_0,p)=\frac{\partial\Phi(t;t_0,x_0,p)}{\partial p}\in\R^{n\times q}.
% \end{equation}
The sensitivities defined in (\ref{eq sensitivity state}) and (\ref{eq sensitivity parameter}) thus represent the differential influence of the initial conditions and parameters, respectively, on the successor of (\ref{eq system}) at time $t$.

%%%%%%%%%%%%%%%%%%%%%%%%%%%%%%%%%%%%%%%%%%%
%%%%%%%%%%%%%%%%%%%%%%%%%%%%%%%%%%%%%%%%%%%
% \subsection{Problem formulation}
% \label{sub prelim problem}
Our objective is to compute an over-approximation of the reachable set (\ref{eq reachable set}) at time $T\geq t_0$ for intervals of initial conditions $X_0\subseteq\R^n$ and of possible parameters $P\subseteq\R^q$.
\begin{problem}
\label{pb oa}
Given times $t_0\in\R$ and $T\geq t_0$ and intervals $X_0\in\mathcal{I}^n$ and $P\in\mathcal{I}^q$, find a set $\bar R(T;t_0,X_0,P)\subseteq\R^n$ such that $R(T;t_0,X_0,P)\subseteq\bar R(T;t_0,X_0,P)$.
\end{problem}

%%%%%%%%%%%%%%%%%%%%%%%%%%%%%%%%%%%%%%%%%%%
%%%%%%%%%%%%%%%%%%%%%%%%%%%%%%%%%%%%%%%%%%%
\section{Reachability with sign-stable sensitivity}
\label{sec sign stable}
In this section, we review the over-approximation approach presented in~\cite{xue2017cdc} with the aim of connecting it in Section~\ref{sub stable mixed mono} to 
discrete-time mixed-monotonicity from~\cite{coogan2015efficient}.

%%%%%%%%%%%%%%%%%%%%%%%%%%%%%%%%%%%%%%%%%%%
\subsection{Sensitivity-based reachability analysis}
\label{sub stable sensitivity}
Reference~\cite{xue2017cdc} provides a method to obtain an interval over-approximation of the reachable set for an autonomous system $\dot x=f(x)$ whose sensitivity matrix $s^x(T,x_0)$ at time $T\geq0$ is \emph{sign-stable} over the set of initial states $X_0$.
For the purpose of the comparison in Section~\ref{sub stable mixed mono}, these results are reviewed here in a more general framework where the system (\ref{eq system}) depends on both time and an uncertain parameter.

We assume that the sensitivity matrices defined in (\ref{eq sensitivity state}) and (\ref{eq sensitivity parameter}) at time $T$ are sign-stable over the sets $X_0$ and $P$, i.e.\ their entries do not change sign when the initial state and parameter vary in $X_0$ and $P$. 
This is formalized as follows.
\begin{assum}
\label{assum stable}
For all $x_0,\tilde x_0\in X_0$, $p,\tilde p\in P$, $i,j\in\{1,\dots,n\}$, $k\in\{1,\dots,q\}$, we have 
$$
\begin{cases}
s^x_{ij}(T;t_0,x_0,p)s^x_{ij}(T;t_0,\tilde x_0,\tilde p)\geq0,\\
s^p_{ik}(T;t_0,x_0,p)s^p_{ik}(T;t_0,\tilde x_0,\tilde p)\geq0.
\end{cases}
$$
\end{assum}

As $X_0$ and $P$ are intervals, let $\underline{x_0},\overline{x_0}\in\R^n$ and $\underline{p},\overline{p}\in\R^q$ be such that $X_0=[\underline{x_0},\overline{x_0}]$ and $P=[\underline{p},\overline{p}]$.
For each $i\in\{1,\dots,n\}$, define states $\underline{\xi}^i,\overline{\xi}^i\in X_0$ (with, e.g.,  $\underline{\xi}^i=[\underline{\xi}^i_1;\dots;\underline{\xi}^i_n]$) and parameters $\underline{\pi}^i,\overline{\pi}^i\in P$ as diagonally opposite vertices of $X_0$ and $P$, respectively, where for each $j\in\{1,\dots,n\}$ (resp.\ $k\in\{1,\dots,q\}$), their entries $\underline{\xi}^i_j$, $\overline{\xi}^i_j$ (resp.\ $\underline{\pi}^i_k$, $\overline{\pi}^i_k$) are allocated to $\underline{x_0}_j$ or $\overline{x_0}_j$ (resp.\ $\underline{p}_k$ or $\overline{p}_k$) based on the sign of the sensitivity $s^x_{ij}$ (resp.\ $s^p_{ik}$):
\begin{equation}
\label{eq stable state-param}
\begin{aligned}
(\underline{\xi}^i_j,\overline{\xi}^i_j)=
\begin{cases}
(\underline{x_0}_j,\overline{x_0}_j)&\text{ if }s^x_{ij}(T;t_0,x_0,p)\geq0,\\
(\overline{x_0}_j,\underline{x_0}_j)&\text{ if }s^x_{ij}(T;t_0,x_0,p)<0,\\
\end{cases}\\
(\underline{\pi}^i_k,\overline{\pi}^i_k)=
\begin{cases}
(\underline{p}_k,\overline{p}_k)&\text{ if }s^p_{ik}(T;t_0,x_0,p)\geq0,\\
(\overline{p}_k,\underline{p}_k)&\text{ if }s^p_{ik}(T;t_0,x_0,p)<0.
\end{cases}
\end{aligned}
\end{equation}

From (\ref{eq stable state-param}) and the sensitivity definitions in (\ref{eq sensitivity state}) and (\ref{eq sensitivity parameter}), the successor $\Phi(T;t_0,\underline\xi^i,\underline{\pi}^i)$ (resp.\ $\Phi(T;t_0,\overline{\xi}^i,\overline{\pi}^i)$) is guaranteed to define the lower bound (resp.\ upper bound) of the reachable set $R(T;t_0,X_0,P)$ on dimension $i$.
\begin{lemma}[\cite{xue2017cdc}]
\label{lemma stable sensi}
Under Assumption~\ref{assum stable}, an over-approximation $\bar R(T;t_0,X_0,P)\in\mathcal{I}^n$ of the reachable set is given in each dimension $i\in\{1,\dots,n\}$ by
\begin{equation*}
  \bar R_i(T;t_0,X_0,P)=[\Phi_i(T;t_0,\underline{\xi}^i,\underline{\pi}^i),\Phi_i(T;t_0,\overline{\xi}^i,\overline{\pi}^i)]\in\mathcal{I}.
\end{equation*}
\end{lemma}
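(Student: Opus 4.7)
The plan is to fix a dimension $i$ and an arbitrary pair $(x_0,p)\in X_0\times P$, and to show that $\Phi_i(T;t_0,x_0,p)$ lies between $\Phi_i(T;t_0,\underline{\xi}^i,\underline{\pi}^i)$ and $\Phi_i(T;t_0,\overline{\xi}^i,\overline{\pi}^i)$. Since $X_0$ and $P$ are intervals and therefore convex, I can connect $(\underline{\xi}^i,\underline{\pi}^i)$ to $(x_0,p)$ by a straight line segment lying entirely in $X_0\times P$, and apply the fundamental theorem of calculus along that segment to express the successor difference as an integral of the sensitivities.

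Concretely, parameterize $\gamma(\lambda)=(1-\lambda)\underline{\xi}^i+\lambda x_0$ and $\mu(\lambda)=(1-\lambda)\underline{\pi}^i+\lambda p$ for $\lambda\in[0,1]$. Then by the chain rule and the definitions (\ref{eq sensitivity state})--(\ref{eq sensitivity parameter}),
\begin{equation*}
\Phi_i(T;t_0,x_0,p)-\Phi_i(T;t_0,\underline{\xi}^i,\underline{\pi}^i)=\int_0^1\!\!\Big(\sum_{j=1}^n s^x_{ij}(T;t_0,\gamma(\lambda),\mu(\lambda))(x_{0,j}-\underline{\xi}^i_j)+\sum_{k=1}^q s^p_{ik}(T;t_0,\gamma(\lambda),\mu(\lambda))(p_k-\underline{\pi}^i_k)\Big)d\lambda.
\end{equation*}
The core of the argument is then a term-by-term sign check. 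By construction in (\ref{eq stable state-param}), whenever $s^x_{ij}\geq 0$ we have $\underline{\xi}^i_j=\underline{x_0}_j\leq x_{0,j}$, so both factors are non-negative; and whenever $s^x_{ij}<0$ we have $\underline{\xi}^i_j=\overline{x_0}_j\geq x_{0,j}$, so both factors are non-positive. Either way the product is non-negative, and the same reasoning applies coordinatewise to the $s^p_{ik}$ terms.

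The step that must be handled carefully is ensuring the sign of each sensitivity entry is preserved along the whole path of integration, and this is precisely what Assumption~\ref{assum stable} delivers: since $(\gamma(\lambda),\mu(\lambda))\in X_0\times P$ for every $\lambda\in[0,1]$, sign-stability guarantees that $s^x_{ij}(T;t_0,\gamma(\lambda),\mu(\lambda))$ and $s^p_{ik}(T;t_0,\gamma(\lambda),\mu(\lambda))$ share the same sign used to build $\underline{\xi}^i$ and $\underline{\pi}^i$ in (\ref{eq stable state-param}). Hence the integrand is non-negative everywhere on $[0,1]$ and the lower-bound inequality follows.

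The upper-bound inequality $\Phi_i(T;t_0,x_0,p)\leq\Phi_i(T;t_0,\overline{\xi}^i,\overline{\pi}^i)$ is proved in exactly the same way by parameterizing the segment from $(x_0,p)$ to $(\overline{\xi}^i,\overline{\pi}^i)$; the sign conventions in (\ref{eq stable state-param}) again ensure that every term in the analogous integral is non-negative. Since $i$ was arbitrary, the interval $\bar R_i$ over-approximates the $i$-th projection of $R(T;t_0,X_0,P)$ for each coordinate, completing the argument. The main obstacle is simply the bookkeeping of the sign cases; once the convex-path integral representation is in place the rest is mechanical.
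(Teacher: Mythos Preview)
Your argument is correct. The line-integral representation along the convex segment in $X_0\times P$, together with the termwise sign check enabled by Assumption~\ref{assum stable}, is exactly the right mechanism, and your handling of both sign cases in (\ref{eq stable state-param}) is accurate.

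As for comparison: the paper does not actually supply its own proof of Lemma~\ref{lemma stable sensi}. The result is imported from~\cite{xue2017cdc}, and the only justification given here is the one-sentence heuristic immediately preceding the lemma (``From (\ref{eq stable state-param}) and the sensitivity definitions\dots the successor $\Phi(T;t_0,\underline\xi^i,\underline{\pi}^i)$ \dots is guaranteed to define the lower bound\dots''). Your write-up thus fills in what the paper leaves to the citation, and it does so in the standard way such results are proved; indeed, the proof of Theorem~\ref{th bounded sensi} later in the paper implicitly relies on this same mean-value/integral reasoning when it applies Lemma~\ref{lemma stable sensi} to the shifted auxiliary trajectory $\hat\Phi$.
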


\begin{remark}
\label{rmk mono}
Lemma~\ref{lemma stable sensi} requires computing the full successor $\Phi$ of (\ref{eq system}) for each pair $(\underline{\xi}^i,\underline{\pi}^i),(\overline{\xi}^i,\overline{\pi}^i)\in X_0\times P$ even though only the $i^{th}$ entry $\Phi_i$ is used in the over-approximation.
More than one entry of a successor $\Phi$ may be used when there exists less than $2n$ distinct pairs.
The computational burden can thus go from at most $2n$ successors when all the above pairs are distinct, to at least $2$ successors when there exist $\tilde x,\hat x\in X_0$ and $\tilde p,\hat p\in P$ such that 
$(\underline{\xi}^i,\overline{\xi}^i,\underline{\pi}^i,\overline{\pi}^i)=(\tilde x,\hat x,\tilde p,\hat p)$ or $(\underline{\xi}^i,\overline{\xi}^i,\underline{\pi}^i,\overline{\pi}^i)=(\hat x,\tilde x,\hat p,\tilde p)$ for all $i\in\{1,\dots,n\}$.
The latter case corresponds to continuous-time monotonicity of (\ref{eq system}) with respect to orthants, as described in~\cite{angeli_monotone}.
\end{remark}

Note also that the over-approximation obtained in Lemma~\ref{lemma stable sensi} is \emph{tight} in the sense that $\bar R(T;t_0,X_0,P)$ is the smallest interval in $\mathcal{I}^n$ containing the reachable set.
\begin{coro}
\label{coro tightness}
For all $X\in\mathcal{I}^n$, if $R(T;t_0,X_0,P)\subseteq X$ then $\bar R(T;t_0,X_0,P)\subseteq X$.
\end{coro}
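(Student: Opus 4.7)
The plan is to exploit the fact that the over-approximation $\bar R(T;t_0,X_0,P)$ constructed in Lemma~\ref{lemma stable sensi} is built \emph{from actual trajectories} of the system, not from some outer enclosure. Specifically, for each dimension $i$, the lower bound $\Phi_i(T;t_0,\underline{\xi}^i,\underline{\pi}^i)$ and the upper bound $\Phi_i(T;t_0,\overline{\xi}^i,\overline{\pi}^i)$ are the $i$-th coordinates of two bona fide successors of (\ref{eq system}) starting from initial states $\underline{\xi}^i,\overline{\xi}^i\in X_0$ with parameters $\underline{\pi}^i,\overline{\pi}^i\in P$. Hence both points $\Phi(T;t_0,\underline{\xi}^i,\underline{\pi}^i)$ and $\Phi(T;t_0,\overline{\xi}^i,\overline{\pi}^i)$ belong to $R(T;t_0,X_0,P)$ by definition~(\ref{eq reachable set}).

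With that observation in hand, the argument is essentially a projection argument. I would fix an arbitrary interval $X=[\underline{x},\overline{x}]\in\mathcal{I}^n$ with $R(T;t_0,X_0,P)\subseteq X$ and work dimension by dimension. For each $i\in\{1,\dots,n\}$, the inclusion $R(T;t_0,X_0,P)\subseteq X$ forces $\Phi_i(T;t_0,\underline{\xi}^i,\underline{\pi}^i)\in[\underline{x}_i,\overline{x}_i]$ and $\Phi_i(T;t_0,\overline{\xi}^i,\overline{\pi}^i)\in[\underline{x}_i,\overline{x}_i]$, since both successors lie in $R(T;t_0,X_0,P)$ and $X$ is a product of intervals. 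Because $[\underline{x}_i,\overline{x}_i]$ is itself an interval, it contains the entire segment between these two scalars, i.e.\ the interval $\bar R_i(T;t_0,X_0,P)$ given by Lemma~\ref{lemma stable sensi}. Taking the product over $i$ yields $\bar R(T;t_0,X_0,P)\subseteq X$.

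There is really no hard step: the only thing to be a little careful about is that one should not need to know which of $\Phi_i(T;t_0,\underline{\xi}^i,\underline{\pi}^i)$ and $\Phi_i(T;t_0,\overline{\xi}^i,\overline{\pi}^i)$ is the smaller number. Both endpoints lie in $[\underline{x}_i,\overline{x}_i]$, so the closed interval they define (in either orientation) is contained in $[\underline{x}_i,\overline{x}_i]$. This confirms that $\bar R(T;t_0,X_0,P)$ is the \emph{tightest} interval over-approximation, since any competing interval enclosure of the reachable set must itself already enclose $\bar R(T;t_0,X_0,P)$.
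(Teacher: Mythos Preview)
Your argument is correct and matches the paper's own proof: both observe that $\underline{\xi}^i,\overline{\xi}^i\in X_0$ and $\underline{\pi}^i,\overline{\pi}^i\in P$, so the endpoints of each $\bar R_i$ are coordinates of actual members of $R(T;t_0,X_0,P)$, forcing any interval containing $R$ to contain $\bar R$ as well. The paper phrases this as a contrapositive (any interval strictly inside $\bar R$ misses part of $R$), but the content is the same.
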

\begin{proof}
% For all $i\in\{1,\dots,n\}$ we have $\underline{\xi}^i,\overline{\xi}^i\in X_0$ and $\underline{\pi}^i,\overline{\pi}^i\in P$ from (\ref{eq stable state-param}), thus leading to $\Phi(T;t_0,\underline{\xi}^i,\underline{\pi}^i),\Phi(T;t_0,\overline{\xi}^i,\overline{\pi}^i)\in R(T;t_0,X_0,P)$.
For all $i\in\{1,\dots,n\}$ we have $\underline{\xi}^i,\overline{\xi}^i\in X_0$ and $\underline{\pi}^i,\overline{\pi}^i\in P$ from (\ref{eq stable state-param}), thus leading to $\Phi(T;t_0,\underline{\xi}^i,\underline{\pi}^i)\in R(T;t_0,X_0,P)$ and $\Phi(T;t_0,\overline{\xi}^i,\overline{\pi}^i)\in R(T;t_0,X_0,P)$.
Since components $i$ of these reachable states define $\bar R_i(T;t_0,X_0,P)$ in Lemma~\ref{lemma stable sensi}, any interval $X\in\mathcal{I}^n$ strictly contained in $\bar R(T;t_0,X_0,P)$ cannot contain the whole reachable set $ R(T;t_0,X_0,P)$.
\end{proof}

%%%%%%%%%%%%%%%%%%%%%%%%%%%%%%%%%%%%%%%%%%%
\subsection{Comparison with discrete-time mixed-monotonicity}
\label{sub stable mixed mono}
In this section, we show that the approach described in Section~\ref{sub stable sensitivity} for the over-approximation of \emph{continuous-time} systems with sign-stable sensitivity is equivalent to the method presented in~\cite{coogan2015efficient} for \emph{discrete-time} systems satisfying a mixed-monotonicity property.
A mixed-monotone system $x^+=F(t,x,p)$ is one that is decomposable into its increasing and decreasing components and can be characterized as having sign-stable Jacobian matrices $\partial F/\partial x$ and $\partial F/\partial p$.
The reader is referred to~\cite{coogan2015efficient} for the formal definition of a mixed-monotone system and its over-approximation method.

\begin{theorem}
\label{th comparison}
Under Assumption~\ref{assum stable} and given the discrete-time system $x^+=F(t,x,p)$ with $F(t,x,p)=\Phi(T;t,x,p)$, the over-approximations of $R(T;t_0,X_0,P)$ in Lemma~\ref{lemma stable sensi} and of $F(t_0,X_0,P)$ in~\cite{coogan2015efficient} are equivalent.
\end{theorem}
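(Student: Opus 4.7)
The plan is to connect the two constructions by observing that the discrete-time system $F(t,x,p)=\Phi(T;t,x,p)$ has Jacobians exactly equal to the sensitivity matrices of~(\ref{eq system}). Concretely, $\partial F/\partial x(t_0,x_0,p)=s^x(T;t_0,x_0,p)$ and $\partial F/\partial p(t_0,x_0,p)=s^p(T;t_0,x_0,p)$. Under Assumption~\ref{assum stable}, these Jacobians are sign-stable on $X_0\times P$, which is precisely the characterization of mixed-monotonicity used in~\cite{coogan2015efficient}. So the first step is to verify this translation of hypotheses.

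Next I would recall the over-approximation rule from~\cite{coogan2015efficient}: for each output coordinate $i$, one selects two vertices $(\underline{\xi}^i,\underline{\pi}^i)$ and $(\overline{\xi}^i,\overline{\pi}^i)$ of $X_0\times P$ by assigning the lower or upper endpoint of each input coordinate according to the sign of the corresponding Jacobian entry $\partial F_i/\partial x_j$ or $\partial F_i/\partial p_k$; the $i$-th coordinates of $F$ evaluated at these two vertices then provide the lower and upper bound of the interval over-approximation on dimension $i$. The second step is to compare this selection rule with~(\ref{eq stable state-param}). Substituting $\partial F_i/\partial x_j=s^x_{ij}(T;t_0,x_0,p)$ and $\partial F_i/\partial p_k=s^p_{ik}(T;t_0,x_0,p)$ shows that the two rules coincide coordinate-by-coordinate, so the selected vertices are identical.

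Finally, the bounds produced are $F_i(t_0,\underline{\xi}^i,\underline{\pi}^i)=\Phi_i(T;t_0,\underline{\xi}^i,\underline{\pi}^i)$ and $F_i(t_0,\overline{\xi}^i,\overline{\pi}^i)=\Phi_i(T;t_0,\overline{\xi}^i,\overline{\pi}^i)$, which are exactly the endpoints of $\bar R_i(T;t_0,X_0,P)$ in Lemma~\ref{lemma stable sensi}. Taking the product over $i$ yields the same interval over-approximation, establishing the equivalence.

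I do not anticipate a serious obstacle: the proof is essentially a dictionary between the two frameworks, and the only thing to be careful about is matching the sign conventions (the tie-breaking for zero sensitivity entries) between~(\ref{eq stable state-param}) and the convention in~\cite{coogan2015efficient}; since the matching vertex is then a point where the corresponding Jacobian entry vanishes, either choice gives the same extremum and the equivalence is unaffected.
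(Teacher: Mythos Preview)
Your argument is correct, but it proceeds differently from the paper's proof. You establish the equivalence \emph{constructively}: identifying $\partial F/\partial x$ and $\partial F/\partial p$ with the sensitivity matrices, matching the vertex-selection rule in~(\ref{eq stable state-param}) with the one in~\cite{coogan2015efficient}, and concluding that both methods output literally the same endpoints $\Phi_i(T;t_0,\underline{\xi}^i,\underline{\pi}^i)$ and $\Phi_i(T;t_0,\overline{\xi}^i,\overline{\pi}^i)$. The paper instead argues \emph{abstractly}: it observes that Assumption~\ref{assum stable} makes $F$ mixed-monotone, then invokes the fact that both procedures yield a \emph{tight} interval over-approximation of the same reachable set (Corollary~\ref{coro tightness} here, \cite[Proposition~2]{coogan2015efficient} there), and concludes by uniqueness of the smallest enclosing interval.

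The trade-off is that the paper's route is shorter and sidesteps any need to recall the precise construction from~\cite{coogan2015efficient} or to reconcile tie-breaking conventions for zero entries---the issue you flagged simply does not arise. Your route, on the other hand, is more transparent about \emph{why} the two intervals agree and would still go through even if one of the two methods were not known a priori to be tight.
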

\begin{proof}
The sign-stability in Assumption~\ref{assum stable} implies that $x^+=F(t,x,p)$ is mixed-monotone as in~\cite{coogan2015efficient}.
The equivalence then follows from the facts that both methods result in a tight interval over-approximation of the reachable set $F(t_0,X_0,P)=R(T;t_0,X_0,P)$ (\cite[Proposition 2]{coogan2015efficient} and Corollary~\ref{coro tightness}) and such tight interval is uniquely defined.
\end{proof}

%%%%%%%%%%%%%%%%%%%%%%%%%%%%%%%%%%%%%%%%%%%
%%%%%%%%%%%%%%%%%%%%%%%%%%%%%%%%%%%%%%%%%%%
\section{Reachability with bounded sensitivity}
\label{sec bounded}
We now extend the over-approximation method described in (\ref{eq stable state-param}) and Lemma~\ref{lemma stable sensi} after relaxing Assumption~\ref{assum stable}.
The new assumption (formalized below) is very mild as it now only requires each entry of the sensitivity matrices at time $T$ to lie in a bounded interval when the initial state and parameter vary in $X_0$ and $P$.
Unlike Assumption~\ref{assum stable}, each of these intervals is allowed to contain the value $0$ in its interior.
This modification is motivated by an extension of the definition of mixed-monotonicity for \emph{continuous-time} systems in~\cite{yang2017note}.
\begin{assum}
\label{assum bounded}
For all $i,j\in\{1,\dots,n\}$, $k\in\{1,\dots,q\}$, there exist $\underline{s^x_{ij}}, \overline{s^x_{ij}}, \underline{s^p_{ik}}, \overline{s^p_{ik}}\in\R$ such that for all $x_0\in X_0$, $p\in P$ we have $s^x_{ij}(T;t_0,x_0,p)\in[\underline{s^x_{ij}},\overline{s^x_{ij}}]$ and $s^p_{ik}(T;t_0,x_0,p)\in[\underline{s^p_{ik}},\overline{s^p_{ik}}]$.
\end{assum}

Since $X_0$ and $P$ are bounded sets, Assumption~\ref{assum bounded} is naturally satisfied by any system whose trajectory function $\Phi$ is continuously differentiable in its initial state and parameter.

Denoting the center of $[\underline{s^x_{ij}},\overline{s^x_{ij}}]$ and $[\underline{s^p_{ik}},\overline{s^p_{ik}}]$ as $s^{x*}_{ij}$ and $s^{p*}_{ik}$, respectively, we update the definition of the states $\underline{\xi}^i,\overline{\xi}^i\in X_0$ and parameters $\underline{\pi}^i,\overline{\pi}^i\in P$ in (\ref{eq stable state-param}) by replacing the right-hand side conditions on the sign of the sensitivity by the same conditions on the center of the sensitivity bounds:
\begin{equation}
\label{eq bounded state-param}
\begin{aligned}
(\underline{\xi}^i_j,\overline{\xi}^i_j)=
\begin{cases}
(\underline{x_0}_j,\overline{x_0}_j)&\text{ if }s^{x*}_{ij}\geq0,\\
(\overline{x_0}_j,\underline{x_0}_j)&\text{ if }s^{x*}_{ij}<0,\\
\end{cases}\\
(\underline{\pi}^i_k,\overline{\pi}^i_k)=
\begin{cases}
(\underline{p}_k,\overline{p}_k)&\text{ if }s^{p*}_{ik}\geq0,\\
(\overline{p}_k,\underline{p}_k)&\text{ if }s^{p*}_{ik}<0.
\end{cases}
\end{aligned}
\end{equation}
Note that the condition $s^{x*}_{ij}\geq0$ in the first line of (\ref{eq bounded state-param}) covers both cases where the whole interval $[\underline{s^x_{ij}},\overline{s^x_{ij}}]$ is positive (as in (\ref{eq stable state-param})) and where it is \emph{mostly positive} ($\underline{s^p_{ik}} \leq 0 \leq s^{x*}_{ij} \leq \overline{s^p_{ik}}$).

To account for the deviations from the sign-stable cases of (\ref{eq stable state-param}) that may arise through the mostly positive and mostly negative cases in (\ref{eq bounded state-param}), we introduce two row vectors $c^i=[c^i_1,\dots,c^i_n]\in\R^n$ and $d^i=[d^i_1,\dots,d^i_q]\in\R^q$ for each $i\in\{1,\dots,n\}$ defined by, for all $j\in\{1,\dots,n\}$, $k\in\{1,\dots,q\}$,
\begin{equation}
\label{eq bounded compensation}
\begin{aligned}
c^i_j=
\begin{cases}
\min(0,\underline{s^x_{ij}})&\text{ if }s^{x*}_{ij}\geq0,\\
\max(0,\overline{s^x_{ij}})&\text{ if }s^{x*}_{ij}<0,\\
\end{cases}\\
d^i_k=
\begin{cases}
\min(0,\underline{s^p_{ik}})&\text{ if }s^{p*}_{ik}\geq0,\\
\max(0,\overline{s^p_{ik}})&\text{ if }s^{p*}_{ik}<0.
\end{cases}
\end{aligned}
\end{equation}
Equation (\ref{eq bounded compensation}) means that $c^i_j=0$ in the sign-stable cases, $c^i_j=\underline{s^x_{ij}}\leq0$ in the \emph{mostly positive case} and $c^i_j=\overline{s^x_{ij}}\geq0$ in the \emph{mostly negative case}.

Without the sign-stability from Assumption~\ref{assum stable}, the successors $\Phi_i(T;t_0,\underline{\xi}^i,\underline{\pi}^i)$ and $\Phi_i(T;t_0,\overline{\xi}^i,\overline{\pi}^i)$ are not guaranteed to over-approximate dimension $i$ of the reachable set.
To compute an interval that is guaranteed to over-approximate the reachable set, the generalization of Lemma~\ref{lemma stable sensi} thus requires the addition of compensation terms as in the result below, where $\underline{\xi}^i,\overline{\xi}^i\in \R^n$ and $\underline{\pi}^i,\overline{\pi}^i\in\R^q$ are column vectors and $c^i\in\R^n$ and $d^i\in\R^q$ are row vectors.
\begin{theorem}
\label{th bounded sensi}
Under Assumption~\ref{assum bounded}, an over-approximation $\bar R(T;t_0,X_0,P)\in\mathcal{I}^n$ is given in each dimension $i\in\{1,\dots,n\}$ by:
\begin{multline*}
\bar R_i(T;t_0,X_0,P)=\\
[\Phi_i(T;t_0,\underline{\xi}^i,\underline{\pi}^i)-c^i(\underline{\xi}^i-\overline{\xi}^i)-d^i(\underline{\pi}^i-\overline{\pi}^i),\\
\Phi_i(T;t_0,\overline{\xi}^i,\overline{\pi}^i)+c^i(\underline{\xi}^i-\overline{\xi}^i)+d^i(\underline{\pi}^i-\overline{\pi}^i)].
\end{multline*}
\end{theorem}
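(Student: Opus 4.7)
My plan is to fix an arbitrary $(x_0,p)\in X_0\times P$ and show that $\Phi_i(T;t_0,x_0,p)$ lies in the claimed interval for every $i$. The key identity comes from the fundamental theorem of calculus applied to $\tau\mapsto \Phi_i(T;t_0,\underline{\xi}^i+\tau(x_0-\underline{\xi}^i),\underline{\pi}^i+\tau(p-\underline{\pi}^i))$: since $X_0$ and $P$ are intervals (hence convex), the whole segment lies in $X_0\times P$, so Assumption~\ref{assum bounded} can be invoked pointwise along the path. Writing
\begin{equation*}
\Phi_i(T;t_0,x_0,p)-\Phi_i(T;t_0,\underline{\xi}^i,\underline{\pi}^i)=\int_0^1\!\!\sum_j s^x_{ij}(\tau)(x_0-\underline{\xi}^i)_j+\sum_k s^p_{ik}(\tau)(p-\underline{\pi}^i)_k\,d\tau,
\end{equation*}
the problem reduces to lower-bounding each term in the integrand.

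The core technical step is an entrywise case analysis. For a state-sensitivity term $s^x_{ij}(\tau)(x_0-\underline{\xi}^i)_j$, split on $s^{x*}_{ij}\geq 0$ versus $s^{x*}_{ij}<0$. When $s^{x*}_{ij}\geq 0$, (\ref{eq bounded state-param}) forces $(x_0-\underline{\xi}^i)_j\in[0,(\overline{\xi}^i-\underline{\xi}^i)_j]$ (note this coordinate of $\overline{\xi}^i-\underline{\xi}^i$ is non-negative), so the worst case is $s^x_{ij}(\tau)=\underline{s^x_{ij}}$ if that value is negative, giving the lower bound $\underline{s^x_{ij}}(\overline{\xi}^i-\underline{\xi}^i)_j=c^i_j(\overline{\xi}^i-\underline{\xi}^i)_j$; if $\underline{s^x_{ij}}\geq 0$, both factors are non-negative and the bound is $0=c^i_j(\overline{\xi}^i-\underline{\xi}^i)_j$, consistent with $c^i_j=0$. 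When $s^{x*}_{ij}<0$, the increment $(x_0-\underline{\xi}^i)_j$ is non-positive and bounded below by $(\overline{\xi}^i-\underline{\xi}^i)_j\leq 0$; multiplied by the largest positive $s^x_{ij}(\tau)=\overline{s^x_{ij}}$ this again yields the lower bound $c^i_j(\overline{\xi}^i-\underline{\xi}^i)_j$, collapsing to $0$ in the sign-stable subcase. An identical argument handles the parameter-sensitivity terms with $d^i_k$ playing the role of $c^i_j$.

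Summing these bounds over $j,k$ and integrating over $\tau\in[0,1]$ (the bounds are $\tau$-independent) gives
\begin{equation*}
\Phi_i(T;t_0,x_0,p)\geq \Phi_i(T;t_0,\underline{\xi}^i,\underline{\pi}^i)+c^i(\overline{\xi}^i-\underline{\xi}^i)+d^i(\overline{\pi}^i-\underline{\pi}^i),
\end{equation*}
which matches the stated lower bound after rewriting $\overline{\xi}^i-\underline{\xi}^i=-(\underline{\xi}^i-\overline{\xi}^i)$ and similarly for $\pi$. The upper bound follows by the symmetric argument, integrating instead along the segment from $(x_0,p)$ to $(\overline{\xi}^i,\overline{\pi}^i)$; by the mirror version of the case analysis, the worst \emph{over-shoot} on each coordinate is precisely $c^i_j(\overline{\xi}^i-\underline{\xi}^i)_j$ or $d^i_k(\overline{\pi}^i-\underline{\pi}^i)_k$ again.

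The main obstacle is purely bookkeeping: the four sign subcases (fully positive, mostly positive, fully negative, mostly negative) must be matched against the definitions (\ref{eq bounded state-param})--(\ref{eq bounded compensation}) to verify that $c^i_j(\overline{\xi}^i-\underline{\xi}^i)_j$ is always the tight lower bound on the term $s^x_{ij}(\tau)(x_0-\underline{\xi}^i)_j$ over the admissible range of $(x_0,s^x_{ij})$. No deeper machinery than the mean-value-type integral and the convexity of $X_0\times P$ is required; once the case analysis is carried out once, the upper-bound calculation is a verbatim mirror.
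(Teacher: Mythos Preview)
Your argument is correct. The mean-value integral along the segment from $(\underline{\xi}^i,\underline{\pi}^i)$ to $(x_0,p)$, together with the four-way case split matching (\ref{eq bounded state-param})--(\ref{eq bounded compensation}), does yield the lower bound exactly as you describe, and the upper bound is indeed symmetric.

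The paper organizes the same computation differently. Rather than bounding each integrand term directly, it introduces an auxiliary trajectory $\hat\Phi_i(T;t_0,x_0,p)=\Phi_i(T;t_0,x_0,p)-c^ix_0-d^ip$, observes that the shifted sensitivities $\hat s^x_{ij}=s^x_{ij}-c^i_j$ are sign-stable with the sign of $s^{x*}_{ij}$, and then invokes Lemma~\ref{lemma stable sensi} as a black box on $\hat\Phi$. Translating back to $\Phi$ leaves residual terms $c^i(x_0-\underline{\xi}^i)$ and $d^i(p-\underline{\pi}^i)$, which are bounded in a second step using only the sign of $c^i_j$ and $d^i_k$. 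Your approach is more elementary and self-contained (it does not rely on Lemma~\ref{lemma stable sensi}), whereas the paper's is more modular: the auxiliary-system trick cleanly separates the ``sign-stable part'' from the ``compensation part'' and makes explicit that Theorem~\ref{th bounded sensi} is a strict generalization of Lemma~\ref{lemma stable sensi}. The underlying inequalities are identical; you have simply unrolled the proof of Lemma~\ref{lemma stable sensi} and fused it with the compensation step.
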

\begin{proof}
Consider an auxiliary system whose trajectories $\hat\Phi$ are such that for all $x_0\in X_0$, $p\in P$ and $i\in\{1,\dots,n\}$ we have $\hat\Phi_i(T;t_0,x_0,p)=\Phi_i(T;t_0,x_0,p)-c^ix_0-d^ip$.
Then, from the sensitivity bounds in Assumption~\ref{assum bounded} and the definition of $c^i$ and $d^i$ in (\ref{eq bounded compensation}), the sensitivities $\hat s^x(T;t_0,x_0,p)$ and $\hat s^p(T;t_0,x_0,p)$ of this auxiliary system are sign-stable over the sets $X_0$ and $P$, i.e.\ for all $x_0\in X_0$ and $p\in P$, $\hat s^x_{ij}(T;t_0,x_0,p)=s^x_{ij}(T;t_0,x_0,p)-c^i_j\geq0$ (resp.\ $\leq0$) if $s^{x*}_{ij}\geq0$ (resp.\ $\leq0$), with similar results for $\hat s^p$.
Since $\hat s^x_{ij}(T;t_0,x_0,p)$ and $s^{x*}_{ij}$ have the same sign (and similarly for $\hat s^p_{ij}(T;t_0,x_0,p)$ and $s^{p*}_{ij}$), this also guarantees that the states $\underline{\xi}^i,\overline{\xi}^i\in X_0$ and parameters $\underline{\pi}^i,\overline{\pi}^i\in P$ obtained in (\ref{eq bounded state-param}) are the same as their hatted counterparts that would be obtained in (\ref{eq stable state-param}) for the auxiliary system.
Applying Lemma~\ref{lemma stable sensi} to $\hat\Phi$ implies that for all $i\in\{1,\dots,n\}$, $x_0\in X_0$ and $p\in P$,
\begin{multline*}
\Phi_i(T;t_0,x_0,p)\in\\
[\Phi_i(T;t_0,\underline{\xi}^i,\underline{\pi}^i)+c^i(x_0-\underline{\xi}^i)+d^i(p-\underline{\pi}^i),\\
\Phi_i(T;t_0,\overline{\xi}^i,\overline{\pi}^i)+c^i(x_0-\overline{\xi}^i)+d^i(p-\overline{\pi}^i)].
\end{multline*}
From (\ref{eq bounded compensation}), $c^i_j\leq0$ (resp.\ $\geq0$) if $s^{x*}_{ij}\geq0$ (resp.\ $\leq0$).
Then for all $x_0\in[\underline{x_0},\overline{x_0}]$, we have $c^i\overline{\xi}^i\leq c^ix_0\leq c^i\underline{\xi}^i$, with $\underline{\xi}^i,\overline{\xi}^i\in X_0$ defined as in (\ref{eq bounded state-param}).
We similarly obtain $d^i\overline{\pi}^i\leq d^ip\leq d^i\underline{\pi}^i$ for all $p\in[\underline{p},\overline{p}]$, which finally leads to the over-approximation in the theorem statement.
\end{proof}

\begin{remark}
\label{rmk tightness}
Unlike the sign-stable case (Lemma~\ref{lemma stable sensi}, Corollary~\ref{coro tightness}), tightness of the over-approximation $\bar R(T;t_0,X_0,P)$ cannot be guaranteed in the general case of Theorem~\ref{th bounded sensi} due to the additional terms $\pm c^i(\underline{\xi}^i-\overline{\xi}^i)$ and $\pm d^i(\underline{\pi}^i-\overline{\pi}^i)$.
% Although the over-approximation $\bar R(T;t_0,X_0,P)$ is tight in the sign-stable case (Lemma~\ref{lemma stable sensi}, Corollary~\ref{coro tightness}), tightness cannot be guaranteed in the more general case of Theorem~\ref{th bounded sensi} due to the additional terms $\pm c^i(\underline{\xi}^i-\overline{\xi}^i)$ and $\pm d^i(\underline{\pi}^i-\overline{\pi}^i)$.
\end{remark}

Following the comparison with discrete-time mixed-monotonicity in Section~\ref{sub stable mixed mono}, a side product of Theorem~\ref{th bounded sensi} is a new over-approximation method for discrete-time systems generalizing the approach from~\cite{coogan2015efficient}.
\begin{coro}
\label{coro mixed mono}
% Let the discrete-time system $x^+=F(x,p)$ have bounded Jacobian matrices $\frac{\partial F(x,p)}{\partial x}\in\mathcal{I}^{n\times n}$ and $\frac{\partial F(x,p)}{\partial p}\in\mathcal{I}^{n\times q}$ over all states $x\in[\underline{x_0},\overline{x_0}]$ and parameters $p\in[\underline{p},\overline{p}]$.
% Then, the reachable set $F(X_0,P)$ after one step can be over-approximated as follows on each dimension $i\in\{1,\dots,n\}$:
% \begin{multline*}
% F_i(X_0,P)\subseteq
% [F_i(\underline{\xi}^i,\underline{\pi}^i)-c^i(\underline{\xi}^i-\overline{\xi}^i)-d^i(\underline{\pi}^i-\overline{\pi}^i),\\
% F_i(\overline{\xi}^i,\overline{\pi}^i)+c^i(\underline{\xi}^i-\overline{\xi}^i)+d^i(\underline{\pi}^i-\overline{\pi}^i)],
% \end{multline*}
% where $\underline{\xi}^i,\overline{\xi}^i,\underline{\pi}^i,\overline{\pi}^i$ and $c^i,d^i$ are defined as in (\ref{eq bounded state-param}) and (\ref{eq bounded compensation}) using the bounds of the Jacobian matrices.
Let $x^+=F(t,x,p)$ have bounded Jacobian matrices $\frac{\partial F(t,x,p)}{\partial x}\in\mathcal{I}^{n\times n}$ and $\frac{\partial F(t,x,p)}{\partial p}\in\mathcal{I}^{n\times q}$ over all states $x\in[\underline{x_0},\overline{x_0}]$ and parameters $p\in[\underline{p},\overline{p}]$.
Then the reachable set $F(t,X_0,P)$ after one step can be over-approximated as follows in each dimension $i\in\{1,\dots,n\}$:
\begin{multline*}
F_i(t,X_0,P)\subseteq
[F_i(t,\underline{\xi}^i,\underline{\pi}^i)-c^i(\underline{\xi}^i-\overline{\xi}^i)-d^i(\underline{\pi}^i-\overline{\pi}^i),\qquad\\
F_i(t,\overline{\xi}^i,\overline{\pi}^i)+c^i(\underline{\xi}^i-\overline{\xi}^i)+d^i(\underline{\pi}^i-\overline{\pi}^i)],
\end{multline*}
where $\underline{\xi}^i,\overline{\xi}^i,\underline{\pi}^i,\overline{\pi}^i$ and $c^i,d^i$ are defined as in (\ref{eq bounded state-param}) and (\ref{eq bounded compensation}) using the bounds of the Jacobian matrices.
\end{coro}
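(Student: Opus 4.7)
The plan is to repeat the argument of Theorem~\ref{th bounded sensi} verbatim, replacing the continuous-time ingredients by their discrete-time analogs: the trajectory map $\Phi(T;t_0,\cdot,\cdot)$ is replaced by $F(t,\cdot,\cdot)$, the sensitivity matrices $s^x$ and $s^p$ by the Jacobians $\partial F/\partial x$ and $\partial F/\partial p$, and the sign-stable reachability statement Lemma~\ref{lemma stable sensi} by its discrete-time counterpart in~\cite{coogan2015efficient}. The latter substitution is legitimate precisely because Theorem~\ref{th comparison} already established that these two results produce the same tight interval over-approximation whenever they apply.

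Concretely, I would first introduce an auxiliary discrete-time map $\hat F$ whose $i$-th component is $\hat F_i(t,x,p)=F_i(t,x,p)-c^i x - d^i p$, with $c^i\in\R^n$ and $d^i\in\R^q$ defined as in (\ref{eq bounded compensation}) using the Jacobian bounds in the corollary statement. Differentiating gives Jacobian entries $\partial \hat F_i/\partial x_j = \partial F_i/\partial x_j - c^i_j$ and $\partial \hat F_i/\partial p_k = \partial F_i/\partial p_k - d^i_k$. By construction of $c^i$ and $d^i$, these shifted partial derivatives keep a constant sign over $X_0\times P$ (positive if the center is nonnegative, nonpositive otherwise), so $\hat F$ is mixed-monotone in the sense of~\cite{coogan2015efficient} with the same corner-selection rule (\ref{eq bounded state-param}).

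Applying the discrete-time mixed-monotonicity over-approximation of~\cite{coogan2015efficient} to $\hat F$ then yields, for every $x_0\in X_0$ and $p\in P$, the componentwise bounds
\begin{multline*}
F_i(t,x_0,p)-c^i x_0 - d^i p\in\\
[F_i(t,\underline{\xi}^i,\underline{\pi}^i)-c^i\underline{\xi}^i - d^i\underline{\pi}^i,\;
F_i(t,\overline{\xi}^i,\overline{\pi}^i)-c^i\overline{\xi}^i - d^i\overline{\pi}^i].
\end{multline*}
Rearranging and using the sign conventions from (\ref{eq bounded compensation}), namely $c^i_j\le 0$ (resp.\ $\ge 0$) exactly when $s^{x*}_{ij}\ge 0$ (resp.\ $<0$) and similarly for $d^i_k$, the linear forms $x_0\mapsto c^i x_0$ and $p\mapsto d^i p$ are monotone in each coordinate over $X_0$ and $P$, so their extrema on $X_0\times P$ are attained at $(\underline{\xi}^i,\underline{\pi}^i)$ and $(\overline{\xi}^i,\overline{\pi}^i)$. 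Substituting these worst-case values for the $c^i x_0 + d^i p$ correction produces exactly the interval stated in the corollary.

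The only real work is the sign bookkeeping in that last step; the rest is mechanical transposition of the proof of Theorem~\ref{th bounded sensi}. The main obstacle I anticipate is making sure that the corner vectors $(\underline{\xi}^i,\underline{\pi}^i)$ and $(\overline{\xi}^i,\overline{\pi}^i)$ selected via the Jacobian centers indeed both extremize the shifted map $\hat F_i$ and give the tight bounds of the linear compensation $c^i x + d^i p$ with the correct sign, so that the two contributions combine additively rather than canceling. This is exactly the consistency between (\ref{eq bounded state-param}) and (\ref{eq bounded compensation}) exploited in the continuous-time proof, and it carries over unchanged.
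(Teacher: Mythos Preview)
Your proposal is correct. Note that the paper does not give an explicit proof of this corollary: it simply presents it as ``a side product of Theorem~\ref{th bounded sensi}'', the implicit argument being that one can take $\Phi(T;t_0,\cdot,\cdot)=F(t,\cdot,\cdot)$, so that the sensitivities $s^x,s^p$ are literally the Jacobians $\partial F/\partial x,\partial F/\partial p$, and then Theorem~\ref{th bounded sensi} applies verbatim. Your route---redoing the auxiliary-map construction directly in discrete time and invoking the mixed-monotone result of~\cite{coogan2015efficient} in place of Lemma~\ref{lemma stable sensi}---is equivalent and arguably cleaner, since it avoids the artificial detour through a continuous-time system whose time-$T$ map happens to equal $F$; in particular it makes clear that no flow or ODE structure is needed, only bounded Jacobians of $F$.
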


%%%%%%%%%%%%%%%%%%%%%%%%%%%%%%%%%%%%%%%%%%%
%%%%%%%%%%%%%%%%%%%%%%%%%%%%%%%%%%%%%%%%%%%
\section{Obtaining bounds on the sensitivities}
\label{sec bounds}
The approach presented above relies on the trajectory $\Phi(\cdot;t_0,x_0,p):[t_0,+\infty)\rightarrow X$ evaluated at time $T\geq t_0$, which is rarely known explicitly.
Although the successors $\Phi(T;t_0,x_0,p)$ can be computed through numerical integration of the system $\dot x=f(t,x,p)$, the main challenge is the computation of the sensitivity matrices $s^x(T;t_0,x_0,p)$ in (\ref{eq sensitivity state}) and $s^p(T;t_0,x_0,p)$ in (\ref{eq sensitivity parameter}) for all $x_0\in X_0$ and $p\in P$ to evaluate the sign-stability or boundedness of these sensitivities as in Assumptions~\ref{assum stable} and~\ref{assum bounded}, respectively.

%%%%%%%%%%%%%%%%%%%%%%%%%%%%%%%%%%%%%%%%%%%
\subsection{Sampling and falsification}
\label{sub bounds sampling}
In this section, we propose a simulation-based approach where we first evaluate the sensitivity bounds from a few samples in $X_0\times P$ and then use a falsification method to iteratively enlarge these bounds by looking for other pairs in $X_0\times P$ whose sensitivity does not belong to these bounds.

From the definition of $s^x$ in (\ref{eq sensitivity state}), we can use the chain rule to define the time-varying linear system
\begin{equation}
\label{eq sensitivity state system}
\dot s^x(t;t_0,x_0,p)=D_f^x|_\Phi s^x(t;t_0,x_0,p),
\end{equation}
where $D_f^x|_\Phi=D_f^x(t,\Phi(t;t_0,x_0,p),p)$ denotes the Jacobian $D_f^x(t,x,p)=\frac{\partial f(t,x,p)}{\partial x}$ evaluated along the trajectory $\Phi(t;t_0,x_0,p)$.
System (\ref{eq sensitivity state system}) is initialized with the identity matrix $s^x(t_0;t_0,x_0,p)=I_n\in\R^{n\times n}$~\cite{donze2007systematic}.
A similar time-varying affine system can be found for the sensitivity $s^p$:
\begin{equation}
\label{eq sensitivity param system}
\dot s^p(t;t_0,x_0,p)=D_f^x|_\Phi s^p(t;t_0,x_0,p)+D_f^p|_\Phi,
\end{equation}
where $D_f^p|_\Phi$ is the evaluation of $D_f^p(t,x,p)=\frac{\partial f(t,x,p)}{\partial p}$ along the trajectory $\Phi(t;t_0,x_0,p)$ and (\ref{eq sensitivity param system}) is initialized with the zero matrix $s^p(t_0;t_0,x_0,p)=0_{n\times q}\in\R^{n\times q}$~\cite{khalil2001nonlinear}.

For a given time $T\geq t_0$, we first compute the sensitivity matrices $s^x(T;t_0,x_0,p)$ and $s^p(T;t_0,x_0,p)$ through the numerical integration of the systems (\ref{eq sensitivity state system}) and (\ref{eq sensitivity param system}) for at least one pair $(x_0,p)\in X_0\times P$ to obtain initial sensitivity bounds denoted as $[\underline{s^x},\overline{s^x}]\in\mathcal{I}^{n\times n}$ and $[\underline{s^p},\overline{s^p}]\in\mathcal{I}^{n\times q}$.
More than one pair $(x_0,p)$ can be obtained through either random sampling or a gridded discretization of $X_0\times P$.

The second step aims to falsify these bounds~\cite{kapinski2016simulation} through an optimization problem, i.e.\ to find $x_0\in X_0$ and $p\in P$ such that either $s^x(T;t_0,x_0,p)\notin[\underline{s^x},\overline{s^x}]$ or $s^p(T;t_0,x_0,p)\notin[\underline{s^p},\overline{s^p}]$.
Focusing on the sensitivity with respect to the initial state, we want to solve the following optimization problem
\begin{equation*}
% \min_{\substack{x_0\in X_0\\ p\in P}}\left(\min_{i,j}\left(\frac{\overline{s^x_{ij}}-\underline{s^x_{ij}}}{2}-\left|s^x_{ij}(T;t_0,x_0,p)-\frac{\underline{s^x_{ij}}+\overline{s^x_{ij}}}{2}\right|\right)\right)
\min_{\substack{x_0\in X_0\\ p\in P}}\left(\min_{i,j}\left(\frac{\overline{s^x_{ij}}-\underline{s^x_{ij}}}{2}-\left|s^x_{ij}(T;t_0,x_0,p)-s_{ij}^{x*}\right|\right)\right),
\end{equation*}
where for each pair $(i,j)$ we consider a negative absolute value function centered on $s_{ij}^{x*}$ and translated such that the global cost function is negative if and only if there exist $i,j\in\{1,\dots,n\}$ such that $s^x_{ij}(T;t_0,x_0,p)\notin[\underline{s^x_{ij}},\overline{s^x_{ij}}]$.
If the obtained local minimum is negative and the corresponding arguments are denoted as $x_0^*\in X_0$ and $p^*\in P$, the sensitivity bounds are  updated as: $\underline{s^x}\leftarrow\min(\underline{s^x},s^x(T;t_0,x_0^*,p^*))$, $\overline{s^x}\leftarrow\max(\overline{s^x},s^x(T;t_0,x_0^*,p^*))$, using elementwise $\min$ and $\max$ operators.
This process is repeated with the new bounds until a positive minimum is obtained.
A similar approach is applied to $[\underline{s^p},\overline{s^p}]$.

\begin{remark}
\label{rmk sampling}
While this approach is likely to result in an accurate approximation of the actual sensitivity bounds, it is not guaranteed to over-approximate the set of all possible sensitivity values over $X_0\times P$ since the falsification relies on an optimization problem only able to provide local minima.
\end{remark}

%%%%%%%%%%%%%%%%%%%%%%%%%%%%%%%%%%%%%%%%%%%
\subsection{Interval arithmetics}
\label{sub bounds interval}
An alternative approach recommended in~\cite{xue2017cdc} is based on the use of interval arithmetics to solve an affine time-varying system as presented in~\cite{althoff2007reachability}.
For the purpose of comparison with the method in Section~\ref{sub bounds sampling} on the numerical examples of Section~\ref{sec simu}, we give an overview of how the results described in~\cite{althoff2007reachability} can be applied to the sensitivity systems (\ref{eq sensitivity state system}) and (\ref{eq sensitivity param system}) to obtain guaranteed bounds on the sensitivity matrices.
We start from the assumption that bounds on the Jacobian matrices $D_f^x(t,x,p)=\frac{\partial f(t,x,p)}{\partial x}$ and $D_f^p(t,x,p)=\frac{\partial f(t,x,p)}{\partial p}$ of (\ref{eq system}) are known or can be computed.
\begin{assum}
\label{assum jacobian}
Given an invariant set $X\subseteq\R^n$ of (\ref{eq system}), there exist interval matrices $\mathcal{A}\in\mathcal{I}^{n\times n}$ and $\mathcal{B}\in\mathcal{I}^{n\times q}$ such that for all $t\in[t_0,+\infty)$, $x\in X$, $p\in P$, we have $D_f^x(t,x,p)\in\mathcal{A}$ and $D_f^p(t,x,p)\in\mathcal{B}$.
\end{assum}

We can then rewrite (\ref{eq sensitivity state system}) and (\ref{eq sensitivity param system}) as the set-valued systems
\begin{align}
\label{eq sensitivity state matrix}
&\dot s^x(t)\in\mathcal{A} s^x(t), &s^x(t_0)=I_n,\\
\label{eq sensitivity param matrix}
&\dot s^p(t)\in\mathcal{A} s^p(t)+\mathcal{B}, &s^p(t_0)=0_{n\times q}.
\end{align}
The solution of these systems at time $T$ is over-approximated using interval arithmetics and a truncated Taylor series of the interval matrix exponential $e^{\mathcal{A}(T-t_0)}\in\mathcal{I}^{n\times n}$, detailed in~\cite{althoff2007reachability}.
\begin{lemma}[\cite{althoff2007reachability}]
\label{lemma sensitivity bounds}
Under Assumption~\ref{assum jacobian}, there exist functions $m:[0,+\infty)\rightarrow\N$ and $E:[0,+\infty)\rightarrow\mathcal{I}^{n\times n}$ defined in~\cite{althoff2007reachability} such that for all $T\geq t_0$, we have 
\begin{gather*}
s^x(T)\in\sum_{i=0}^{m(T-t_0)}\frac{(\mathcal{A}(T-t_0))^i}{i!}+E(T-t_0),\\
s^p(T)\in\left(\sum_{i=0}^{m(T-t_0)}\frac{(\mathcal{A}(T-t_0))^i}{(i+1)!}+E(T-t_0)\right)(T-t_0)\mathcal{B}.
\end{gather*}
\end{lemma}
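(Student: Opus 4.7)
The plan is to lift each of the set-valued sensitivity systems (\ref{eq sensitivity state matrix}) and (\ref{eq sensitivity param matrix}) to an interval matrix exponential and then over-approximate that exponential by a truncated Taylor series plus an interval remainder. Denote $\tau=T-t_0$. For any fixed selection $A\in\mathcal{A}$, the solution of $\dot s^x=As^x$ with $s^x(t_0)=I_n$ is the matrix exponential $e^{A\tau}$, so the reachable set of (\ref{eq sensitivity state matrix}) at time $T$ is contained in $\{e^{A\tau}:A\in\mathcal{A}\}$. The natural over-approximation is therefore the interval matrix $e^{\mathcal{A}\tau}$ computed through interval arithmetic.

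Next, I would expand $e^{A\tau}=\sum_{i=0}^{\infty}(A\tau)^i/i!$, truncate at a finite order $m(\tau)$, and rely on the fundamental inclusion property of interval arithmetic: the interval partial sum $\sum_{i=0}^{m(\tau)}(\mathcal{A}\tau)^i/i!$ contains $\sum_{i=0}^{m(\tau)}(A\tau)^i/i!$ for every $A\in\mathcal{A}$. It then remains to enclose the tail $\sum_{i=m(\tau)+1}^{\infty}(A\tau)^i/i!$ in a single interval matrix $E(\tau)$. The classical choice, given in \cite{althoff2007reachability}, is to pick $m(\tau)$ large enough that $\|\mathcal{A}\|\tau/(m(\tau)+2)<1$; the remaining factorial terms are then dominated by a geometric series in the induced norm, and this bound is inflated componentwise into the interval matrix $E(\tau)$. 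Summing the truncated Taylor polynomial and $E(\tau)$ yields the enclosure stated for $s^x(T)$.

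For the parameter sensitivity, I would use the variation-of-constants formula on (\ref{eq sensitivity param matrix}): for fixed $A\in\mathcal{A}$ and $B\in\mathcal{B}$,
\begin{equation*}
s^p(T)=\int_{t_0}^T e^{A(T-t)}\,dt\,B=\Bigl(\sum_{i=0}^{\infty}\frac{(A\tau)^i}{(i+1)!}\Bigr)\tau B.
\end{equation*}
Truncating this series at the same order $m(\tau)$ and absorbing the tail into the same interval $E(\tau)$ (which already majorizes the factorial tail of the exponential series and hence, up to the factor $1/(i+1)\leq 1$, also the integrated series) reproduces the second enclosure. Applying the interval inclusion with $A\in\mathcal{A}$ and $B\in\mathcal{B}$ finishes the argument.

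The main obstacle is the rigorous construction of the pair $(m(\tau),E(\tau))$: one must simultaneously pick $m(\tau)$ large enough for the Taylor tail to be summable as a geometric majorant over every $A\in\mathcal{A}$, and construct an interval matrix $E(\tau)$ that componentwise contains that tail uniformly in $A$. Since the entries of $\mathcal{A}^i$ grow combinatorially under interval arithmetic, care is needed to keep the norm-based majorant tight; the explicit formulas for $m$ and $E$ provided in \cite{althoff2007reachability} handle this step, so the remaining work in our setting is purely to verify that those formulas apply unchanged to the initial-value problems (\ref{eq sensitivity state matrix}) and (\ref{eq sensitivity param matrix}).
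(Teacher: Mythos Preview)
The paper does not prove this lemma at all: it is stated with the citation \cite{althoff2007reachability} and immediately followed by Remark~\ref{rmk interval arithmetics}, so there is nothing to compare your argument against on the paper's side. Your sketch is in the spirit of the cited reference, but it contains a genuine gap that you should be aware of.

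The problematic step is the sentence ``the reachable set of (\ref{eq sensitivity state matrix}) at time $T$ is contained in $\{e^{A\tau}:A\in\mathcal{A}\}$.'' This is false in general. The inclusion $\dot s^x\in\mathcal{A}s^x$ arises from (\ref{eq sensitivity state system}) where the coefficient matrix $D_f^x|_\Phi$ is \emph{time-varying}: for each $(x_0,p)$ one gets a specific $A(t)=D_f^x(t,\Phi(t;t_0,x_0,p),p)\in\mathcal{A}$, and the resulting state-transition matrix is not a matrix exponential $e^{A\tau}$ for any single $A\in\mathcal{A}$. The set of such transition matrices is generally strictly larger than $\{e^{A\tau}:A\in\mathcal{A}\}$, so your containment is backwards and the rest of the argument, which only encloses the constant-$A$ exponentials, does not bound the actual sensitivities.

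The fix is to bypass the matrix exponential entirely and bound the Peano--Baker (Picard) series of the time-varying linear system term by term: the $i$-th iterated integral $\int_{t_0}^T A(t_1)\int_{t_0}^{t_1}A(t_2)\cdots dt_i\cdots dt_1$ lies in $\mathcal{A}^i\tau^i/i!$ because each $A(t_k)\in\mathcal{A}$ and the simplex has volume $\tau^i/i!$. This recovers exactly the interval Taylor polynomial $\sum_{i=0}^{m(\tau)}(\mathcal{A}\tau)^i/i!$, and the same norm-based tail bound $E(\tau)$ from \cite{althoff2007reachability} then encloses the remainder uniformly over all measurable selections $A(\cdot)\in\mathcal{A}$. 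The variation-of-constants argument for $s^p$ goes through analogously once you replace $e^{A(T-t)}$ by the true transition matrix $\Psi(T,t)$ and bound its Peano--Baker series in the same way.
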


\begin{remark}
\label{rmk interval arithmetics}
Unlike the sampling-based method in Section~\ref{sub bounds sampling}, Lemma~\ref{lemma sensitivity bounds} provides guaranteed over-approximations for the sensitivities but risks being overly conservative since interval arithmetics cannot provide exact set computation when more than two interval matrices are multiplied~\cite{jaulin2001applied}.
\end{remark}

\begin{remark}
\label{rmk taylor}
The minimal Taylor order $m(T-t_0)$ for Lemma~\ref{lemma sensitivity bounds} to hold is a linearly increasing function of the time step $T-t_0$~\cite{althoff2007reachability}.
This approach might thus be practically infeasible when the desired time step $T-t_0$ is too large.
\end{remark}

%%%%%%%%%%%%%%%%%%%%%%%%%%%%%%%%%%%%%%%%%%%
%%%%%%%%%%%%%%%%%%%%%%%%%%%%%%%%%%%%%%%%%%%
\section{Numerical examples}
\label{sec simu}
All computations are run with Matlab on a laptop with a 1.7GHz CPU and 4GB of RAM.

%%%%%%%%%%%%%%%%%%%%%%%%%%%%%%%%%%%%%%%%%%%
\subsection{Traffic network}
\label{sub simu traffic}
Consider the $3$-link traffic network describing a \emph{diverge} junction (the vehicles in link $1$ divide evenly among the outgoing links $2$ and $3$) inspired by~\cite{coogan2016benchmark}:
\begin{equation}
\label{eq traffic CT}
\dot x = \frac{1}{T}
\begin{pmatrix}
p-g(x)\\
g(x)/2-\min(c,vx_2)\\
g(x)/2-\min(c,vx_3)
\end{pmatrix},
\end{equation}
where $g(x)=\min(c,vx_1,2w(\bar x-x_2),2w(\bar x-x_3))$, $x\in\R^3$ is the vehicle density on the three links, $p\in P=[40,60]$ is the constant but uncertain vehicle inflow to link $1$, $T=30$ seconds and $c=40$, $v=0.5$, $\bar x=320$, $w=1/6$ are known parameters of the network detailed in~\cite{coogan2016benchmark}.

In addition to the uncertain disturbance input $p\in P=[40,60]$, we consider a set $X_0\subseteq\mathcal{I}^3$ of initial conditions such that $X_0=[150,200]\times[250,320]\times[50,100]$, meaning that link $2$ is close to its maximal capacity $\bar x=320$ while link $3$ has more availability.
Figure~\ref{fig traffic} presents the projection in the $(x_1,x_3)$ plane of the initial interval $X_0$ (dashed black), the reachable set $\{\Phi(T;x_0,p)~|~x_0\in X_0,p\in P\}$ (hatched black) of (\ref{eq traffic CT}) after time $T=30$ seconds and two interval over-approximations of this set obtained as described below.

Using the simulation-based approach in Section~\ref{sub bounds sampling}, we get a first approximation of the sensitivity bounds of (\ref{eq traffic CT}) from a grid of $16$ samples in $X_0\times P$ ($2$ samples per dimension) computed in $0.98$s, which is then refined in $15.8$s through falsification, stopping after $6$ iterations.
From these times, it is thus advised to use a finer sampling of $X_0\times P$ to obtain a good initial estimation of the sensitivity bounds so that the number of falsification runs is reduced.
%As expected from the sign-stability of the Jacobian matrices for the discrete-time counterpart of (\ref{eq traffic CT}) described in~\cite{coogan2016benchmark}, 
The numerical computations indicate that the sensitivity bounds for (\ref{eq traffic CT}) are sign-stable, thus leading to the tight red over-approximation in Figure~\ref{fig traffic} obtained after applying Lemma~\ref{lemma stable sensi}.

The second over-approximation in green is computed from sensitivity bounds obtained with the interval arithmetics approach in Lemma~\ref{lemma sensitivity bounds}, where the Jacobian bounds as in Assumption~\ref{assum jacobian} are obtained analytically from the dynamics (\ref{eq traffic CT}).
We pick a Taylor order $m=7$ (empirically, we see no improvement on the sensitivity bounds for larger values) which is greater than the minimal value $m(T)=0$ for Lemma~\ref{lemma sensitivity bounds} to hold.
The sensitivity bounds are computed in $14$ms, but as predicted in Remark~\ref{rmk interval arithmetics} they are much more conservative than the one obtained in the first approach and they do not satisfy Assumption~\ref{assum stable}.
The over-approximation in green is thus obtained from the generalized result in Theorem~\ref{th bounded sensi} and is much larger than the red one, firstly because Theorem~\ref{th bounded sensi} is known not to be tight (Remark~\ref{rmk tightness}), but also because it tries to compensate for the sensitivity elements believed not to be sign-stable while their real values are actually sign-stable according to the sampling-based estimation above.

The computation of both red and green over-approximations (from Lemma~\ref{lemma stable sensi} and Theorem~\ref{th bounded sensi}) is done in $60$ms.
The volumes of the red and green over-approximations are respectively $1.7$ and $5.1$ times the volume of the true reachable set.

\begin{figure}[tbh]
\centering
\includegraphics[width=\columnwidth]{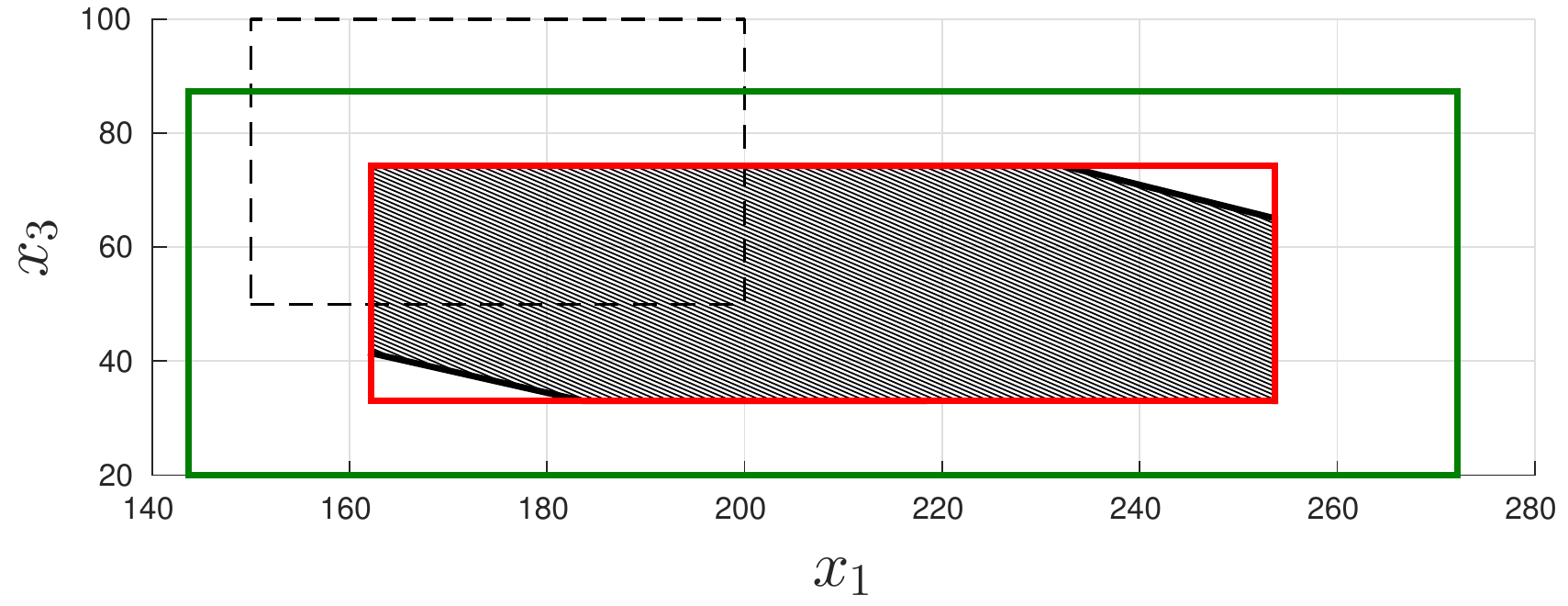}
\caption{$(x_1,x_3)$-projection of the reachable set (hatched black) of (\ref{eq traffic CT}) from the initial interval (dashed black) and its over-approximations using sampling-based sensitivity bounds (red) and interval arithmetics (green).}
\label{fig traffic}
\end{figure}

To study the scalability of the approach, we now extend this three link example by adding links downstream of the diverging junction so that traffic on link $2$ flows to link $4$ then to link $6$, \emph{etc.}, and, likewise, traffic flows from link $3$ to $5$ to $7$, \emph{etc.}
The modified dynamics are
\begin{align}
   \label{eq traffic large1}
  f_{i}(x,p)&=\frac{1}{T}(g(x)/2-h(x_i,x_{i+2})),\ i\in\{2,3\}\\
  \label{eq traffic large2}
  f_{i}(x,p)&=\frac{1}{T}(\beta h(x_{i-2},x_i)-h(x_i,x_{i+2})),\ i\in\{4,\ldots,n\}
\end{align}
where $  h(\eta,\zeta)=\min\left(c, v\eta,\frac{w}{\beta}(\bar{x}-\zeta)\right)$, 
$n$ is the total number of links in the network, and we take $\beta=\frac{3}{4}$ ($1-\beta$ is the fraction of vehicles exiting the network after each link).
For $i\in\{n-1,n\}$, the term $\frac{w}{\beta}(\bar{x}-x_{i+2})$ is excluded from the minimization in $h$.
Considering a $11$-link network with $X_0=[20,300]^{11}$, we apply the same methods as for the previous $3$-link case.
The sensitivity bounds are first evaluated from a grid of $4096$ samples of $X_0\times P$ ($2$ samples per dimension) in $862$s, followed by $3$ iterations of falsification in $21$s, resulting in sign-stable bounds.
Another set of bounds is computed in $0.54$s through interval arithmetics with a Taylor order $m=15$, resulting in bounds which are not sign-stable.
The over-approximations in the state space (from Lemma~\ref{lemma stable sensi} and Theorem~\ref{th bounded sensi}) using both sets of sensitivity bounds are computed in $0.25$s.
From the sign-stability assumption, the first interval over-approximation is guaranteed to be tight to the actual reachable set (Corollary~\ref{coro tightness}).
On the other hand, the over-approximation obtained from interval arithmetics is not tight and has $68$ times the volume of the first over-approximation, making it too loose for practical use.
From the computation times for both the $3$-link and $11$-link models, we note that the approach scales well with the state dimension apart from the main bottleneck in the sampling approach, whose complexity grows exponentially with $n$ for a gridded sampling.

%%%%%%%%%%%%%%%%%%%%%%%%%%%%%%%%%%%%%%%%%%%
\subsection{Satellite orbit}
\label{sub simu satellite}
Consider the non-linear system describing a satellite orbiting a celestial body from~\cite{thomson2012introduction}:
\begin{equation}
\label{eq satellite}
\dot x =
\begin{pmatrix}
x_2\\
-\frac{p}{x_1^2}+x_1x_4^2\\
x_4\\
-\frac{2x_2x_4}{x_1}
\end{pmatrix},\quad
x(0)=
\begin{pmatrix}
R+400\\
0\\
0\\
\sqrt{\frac{p}{(R+400)^3}}
\end{pmatrix},
\end{equation}
where $x_1$ is the distance of the satellite to the center of the body, $x_3$ its angular position and $x_2$ and $x_4$ their respective derivatives.
The parameter $p\in\R$ is defined as $p=GM$, where $G$ is the gravitational constant and $M$ the mass of the body.
The initial conditions of (\ref{eq satellite}) are chosen to obtain a circular orbit at $400$km above the body's surface (radius $R$).
Assuming uncertain values (around Earth's known values) for both the parameter $p\in[3.9779,3.9938]\cdot 10^5$ km$^3/$s$^2$ and the desired orbit radius $R+400\in[6.7718,6.7845]\cdot 10^3$ km, we obtain uncertainty bounds denoted as $p\in P\subseteq\R$ and $x(0)\in X_0\in\mathcal{I}\times\{0\}\times\{0\}\times\mathcal{I}\subseteq\R^4$.

We want to study the effect of these uncertainties on the reachable set of (\ref{eq satellite}) at time $T=92$ minutes (after approximately one whole revolution around the Earth).
As expected from Remark~\ref{rmk taylor}, the interval arithmetics result from Lemma~\ref{lemma sensitivity bounds} is not applicable to (\ref{eq satellite}) since the choice of $T=5520$s requires a minimum Taylor order $m(T)=88883$, which cannot be computed in reasonable time.
We thus rely on the sampling-based approach from Section~\ref{sub bounds sampling} by first evaluating the sensitivity bounds for $100$ random samples in $X_0\times P$, obtained in $68$s.
A single iteration of falsification is then run in $5$s, meaning that the sampling-based approximation of the bounds already covered all sensitivity values that could be found from the optimization problem solved in the falsification.
The obtained sensitivity bounds $[\underline{s^x},\overline{s^x}]\in\mathcal{I}^{4\times 4}$ and $[\underline{s^p},\overline{s^p}]\in\mathcal{I}^4$ do not satisfy the sign-stability condition in Assumption~\ref{assum stable} on $9$ of their $20$ entries, thus requiring the application of the generalized result in Theorem~\ref{th bounded sensi} to compute (in $58$ms) the over-approximation of the reachable set of (\ref{eq satellite}) at time $T$, projected into the polar coordinate system $(x_1,x_3)$ in Figure~\ref{fig satellite} (in blue) along with an estimation of the actual reachable set (cloud of black dots) obtained from $10000$ random samples in $X_0\times P$.
Despite the lack of guarantee in the sampling-based approach (Remark~\ref{rmk sampling}), Figure~\ref{fig satellite} suggests that the computed interval does indeed over-approximate the reachable set and is not overly conservative.

\begin{figure}[tbh]
\centering
\includegraphics[width=\columnwidth]{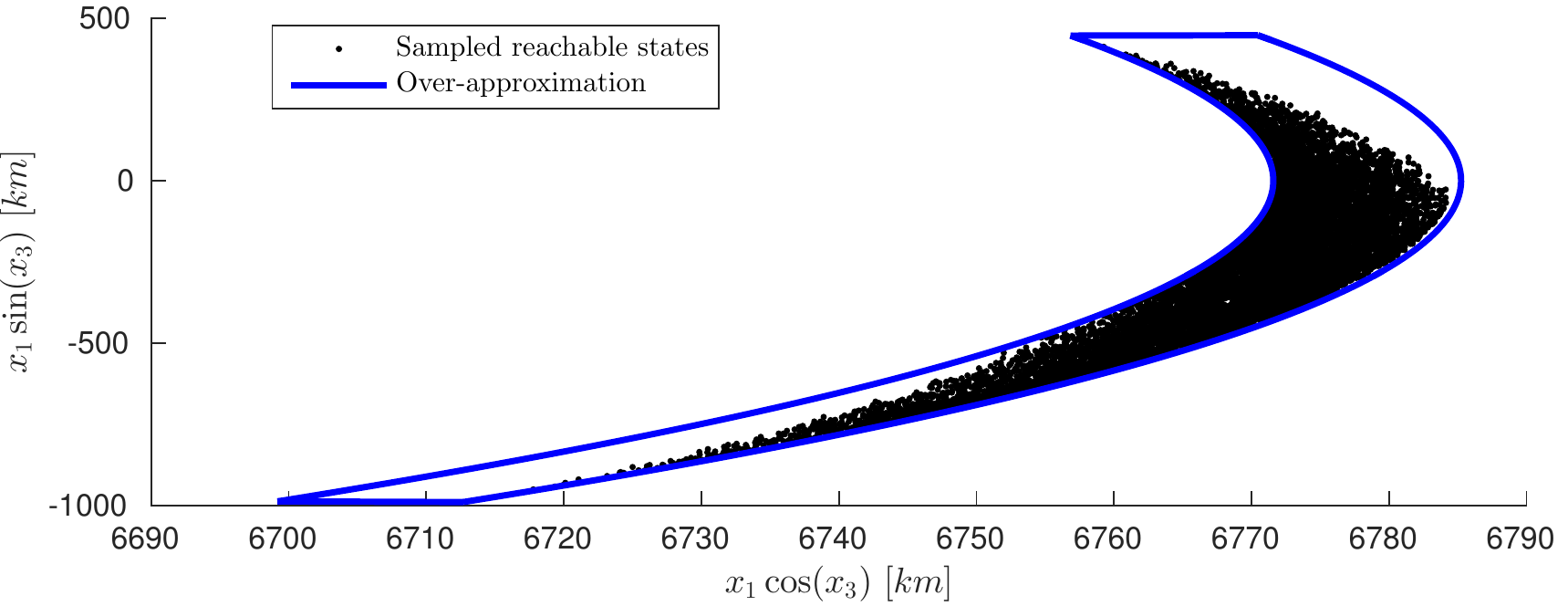}
\caption{Reachable set (black) of (\ref{eq satellite}) and its over-approximation $\bar R(T;0,X_0,P)$ (blue) projected in the polar plane $(x_1,x_3)$.}
\label{fig satellite}
\end{figure}

%%%%%%%%%%%%%%%%%%%%%%%%%%%%%%%%%%%%%%%%%%%
%%%%%%%%%%%%%%%%%%%%%%%%%%%%%%%%%%%%%%%%%%%
\section{Conclusion}
\label{sec conclu}
This paper provides a new reachability analysis method based on the sensitivity matrices of a continuous-time system and applicable to the wide class of systems whose sensitivity matrices at a given time are bounded over the sets of uncertain parameters and initial conditions.
This assumption is very mild since it is naturally satisfied by any system with a sufficiently smooth trajectory function.
The computation of an interval over-approximation of the reachable set using this approach has favorable scalability, since its complexity is at worst linear in the state dimension.

Since the system trajectories or sensitivity matrices are rarely known explicitly, the main challenge of this method lies in obtaining bounds on the sensitivity.
Two such approaches are considered in this paper.
The first approach relies on interval arithmetics and provides guaranteed sensitivity bounds but can rarely be applied in practice, as the bounds are often overly conservative and the computation is infeasible for larger time steps.
The second approach is based on sampling and falsification and provides more reliable values for the sensitivity bounds although without formal guarantees, which may present a risk for safety-critical applications.
The sampling-based approach is currently the main computational bottleneck, since the suggested number of samples to obtain a good first estimate of the sensitivity bounds (in order to minimize the number of falsification iterations) grows exponentially with the state dimension.

Future work will aim to exploit these results for abstraction-based synthesis (see e.g.~\cite{coogan2015efficient}), where a control problem on a differential equation is instead solved on a finite transition system abstracting the continuous dynamics.
In such approaches, reachability analysis plays a central role in the creation of the abstraction and intervals are commonly used for their implementation benefits (low memory requirement, easy to check intersection with other intervals).

%%%%%%%%%%%%%%%%%%%%%%%%%%%%%%%%%%%%%%%%%%%
%%%%%%%%%%%%%%%%%%%%%%%%%%%%%%%%%%%%%%%%%%%
\bibliographystyle{abbrv}
\bibliography{2018_Meyer_LCSS18} 
\end{document}